\documentclass[final,5p,times,authoryear]{elsarticle}
\usepackage{amsmath,amssymb,amsthm,mathtools}
\usepackage{microtype}
\usepackage{hyperref}
\journal{Automatica}
\biboptions{round,semicolon}

\hypersetup{
  colorlinks=true,
  linkcolor=blue,
  citecolor=blue,
  urlcolor=blue,
  pdftitle={Elementwise Positivity of the Solution to Lyapunov Equation for Hurwitz Companion
Matrices}
}

\newtheorem{theorem}{Theorem}[section]
\newtheorem{lemma}[theorem]{Lemma}
\newtheorem{proposition}[theorem]{Proposition}
\newtheorem{corollary}[theorem]{Corollary}
\theoremstyle{definition}
\newtheorem{example}[theorem]{Example}
\theoremstyle{remark}
\newtheorem{remark}[theorem]{Remark}

\begin{document}

\begin{frontmatter}

\title{Elementwise Positivity of the Solution to Lyapunov Equation for Hurwitz Companion Matrices}

\author[aff1]{Miaomiao Wang}
\author[aff2]{Patrizio Colaneri}
\author[aff1]{Jie Chen}
\ead{jichen@cityu.edu.hk}
% \ead{miaomiao.wang@cityu.edu.hk; patrizio.colaneri@polimi.it; jichen@cityu.edu.hk}
%\thanks{ }
\address[aff1]{Department of Electrical Engineering, City University of Hong Kong, Hong Kong SAR, China} 
% (email: miaomiao.wang@cityu.edu.hk; jichen@cityu.edu.hk}
\address[aff2]{Department of Electronics, Information and Bioengineering, Polytechnic University of Milan, Italy}

\begin{abstract}
We prove, with the aid of AI, that for every real symmetric forcing matrix $Q\succeq0$, the
unique solution of a continuous-time Lyapunov equation is entrywise
nonnegative whenever the state matrix is a real Hurwitz companion
matrix.  This proves an earlier conjecture. The proof makes no assumption on the spectrum of the state matrix, and is made possible by means of using Horner polynomial matrices. 
%
%resolves the companion
%Lyapunov entrywise-nonnegativity conjecture without a real-spectrum
%assumption and without relying on a spectral decomposition.  
% The main
% ingredient is a mixed positivity theorem for the coefficients of the
% adjugate polynomial of an accretive matrix. 
%The proof establishes a mixed positivity property for the coefficients
%of the adjugate polynomial of a real accretive matrix. By dimension %reduction, decomposes a compressed kernel
%into a scalar determinant product and a lower-dimensional quadratic
%term.  
% We give two complementary
% proofs.  The first, by dimension reduction, decomposes a compressed kernel
% into a scalar determinant product and a lower-dimensional quadratic
% term.  The second constructs a scalar rational zero-exclusion
% certificate that makes the kernel right-half-plane stable and then
% invokes a coefficientwise positivity lemma. 
%We further show that
%positive-definite forcing makes
%every entry of the continuous-time solution strictly positive.  
We also show that discrete-time counterpart of the conjecture is false: Counterexamples can be readily constructed to show that Schur companion
matrices, including one generated by a polynomial with all positive
coefficients, yield solutions with negative off-diagonal entries
even with identity forcing.  
%We show that discrete time retains Loewner
%positivity and, after a Cayley transformation followed by a companion
%realization, entrywise positivity in a transformed coordinate system.
\end{abstract}

\begin{keyword}
Lyapunov equation \sep companion matrix \sep entrywise positivity
\sep adjugate polynomial \sep dimension reduction
% \sep rational certificate 
\sep Stein equation \sep Cayley transform
\end{keyword}

\end{frontmatter}

\section{Introduction}
\label{sec:introduction}

For a Hurwitz matrix $A$ and a symmetric matrix $Q\succeq0$, the
continuous-time Lyapunov equation
\begin{equation}
  XA+A^{\mathsf{T}}X=-Q
  \label{eq:intro-continuous}
\end{equation}
has a unique solution $X\succeq0$.  This familiar statement concerns
the Loewner order.  It does not, by itself, concerns the signs of the
individual entries of $X$. Nor is it true in general that any of such holds for a general Hurwitz matrix. This paper is concerned with Hurwitz matrices given in the companion form.

Lyapunov, Sylvester, and Stein equations with companion state matrices
have been studied through factored forms, Bezoutian and Hankel
structure, and canonical-form solution algorithms; see
\citet{LancasterLererTismenetsky1984,HeinigJungnickel1986,SreeramAgathoklis1991}
for continuous-time matrix equations and
\citet{KanellakisTawfikAgathoklis2021} for a discrete-time treatment.
Those works provide structural or explicit solution methods, whereas the
present question concerns the signs of all entries of the solution
operator under arbitrary positive-semidefinite forcing.

There was for some time a conjecture, largely circulated through word of mouth among the Italian controls community but seemingly unknown to the outside world, concerning the nonnegativity of the elements in $X$. The conjecture was announced formally in a recent article by  
\citet{Ferrante2026}, which was attributed to Luigi Fortuna, a fomer control theorist at the University of Catania, Italy. 
%
%For the controller companion realization of a real Hurwitz
%polynomial, \citet{Ferrante2026} formulated a conjecture, attributed
%there to Luigi Fortuna, 
Ferrante proved partially that with a Hurwitz companion matrix $A$, the solution of
\eqref{eq:intro-continuous} is entrywise nonnegative for every
$Q\succeq0$. More specifically, he proved the conjecture under a real-spectrum
assumption (all the eigenvalues of $A$ are real), using positivity results for a class of Cauchy-like matrices
\citep{FerranteCauchy2026}. Ferrante's proof is exclusively built on the fact that a companion matrix can be diagonalized by a Vandermonde matrix if the matrix has distinct eigenvalues, and otherwise transformed into a Jordan form by a confluent Vandermonde matrix. 

Our first purpose in this paper is to provide a complete proof, that is, we prove the conjecture without the real-spectrum assumption. By the joint effort of the authors and AI tools, we came up with an elementary proof that makes no distinction of companion matrices with or without complex eigenvalues, and unlike in \citet{Ferrante2026}, has no reference to Vandermonde matrices. Instead, the proof utilizes only the companion matrix structure and the fact that the Lyapunov equation admits a unique semidefinite solution, the bare minimum information known from the problem formulation. Specifically, it proceeds with the construction of Horner matrix polynomials which are in turn connected to the accretivity of certain adjugate polynomial matrices. This turns out to be all the technical vehicles required; the rest of the proof is solely algebraic manipulation. 
%The proof uses a controllability Gramian with rank-one forcing to
%transform the companion matrix into a dissipative matrix. Horner
%polynomials of the transformed matrix are then identified with
%coefficients of an adjugate polynomial. The required mixed positivity
%of these coefficients is established by induction on the matrix
%dimension, using a codimension-one compression and a rank-one
%determinant decomposition. This argument avoids spectral decomposition
%and applies equally to real and nonreal spectra.
% A rank-one
% Lyapunov Gramian changes the companion matrix into a dissipative matrix,
% after which Horner polynomials become coefficients of an adjugate
% polynomial.  We give two complementary proofs of the required mixed
% coefficient positivity: a recursive codimension-one compression and an
% explicit rational zero-exclusion certificate.  These arguments separate
% the algebraic dimension-reduction mechanism from the half-plane analytic
% mechanism.
The argument also gives in a straightforward manner a strict conclusion: if $Q\succ0$, then every
entry of $X$ is strictly positive.  

%This is stronger than $X\succ0$,
%since a positive-definite matrix may have negative off-diagonal
%entries.

Our second purpose is to determine whether the conjecture remains true for the
discrete-time Lyapunov equation
\begin{equation}
  X-A^{\mathsf{T}}XA=Q.
  \label{eq:intro-discrete}
\end{equation}
The answer is simple and blunt: No! 
For a Schur matrix in companion form and a semidefinite $Q\succeq0$, the solution is an observability
Gramian and is therefore positive semidefinite.  Nevertheless, the continuous-time
entrywise conclusion fails already in dimension two, even for
$Q=I$.  A second example shows that the failure persists when all
coefficients of the Schur polynomial are positive.  
% This obstruction
% also shows that the direct disk analogue of the coefficient lemma used
% in the continuous-time proof is false.  
Finally, a Cayley transform
converts \eqref{eq:intro-discrete} into a continuous-time Lyapunov
equation.  After realizing the transformed state matrix in companion
form, the preceding theorem recovers entrywise positivity of a
congruence transform of $X$, although not of $X$ in the original
discrete companion coordinates.

We concede that while the conjecture and its proof are remotely related to systems theory subjects in, e.g., structured
realizations and positive systems (see, for example,
\citet{FarinaRinaldi2000}), they are essentially of academic interest and are driven by intellectual curiosity. We do not purport that the problem will be of significant practical interest.

Section~\ref{sec:problem} states the continuous-time result.
 Section~\ref{sec:adjugate} gives a dimension-reduction proof of the key adjugate lemma.  
% Section~\ref{sec:rational-proof} gives a complementary rational-function proof.  
Section~\ref{sec:continuous-proof} then
proves the main theorem and its strict version.
Section~\ref{sec:discrete} gives the discrete-time results and
counterexamples.  Section~\ref{sec:cayley} develops the transformed
discrete-time statement.

\paragraph{Notation}
For a matrix $X$, we denote its elements by $X_{kl}$. For a real matrix $M$, $M^{\mathsf{T}}$ denotes its transpose.  For a complex
matrix or vector, $M^*$ denotes the conjugate transpose.  The symbols
$M\succeq0$ and $M\succ0$ refer to the Loewner order, indicating that a matrix is nonnegative definite and positive definite, respectively. Define 
\[
  \operatorname{sym}(M):=\frac{M+M^{\mathsf{T}}}{2}.
\]
The open right half-plane and open unit disk are denoted by
$\mathbb{C}_{+}:=\{z\in\mathbb{C}:\operatorname{Re}z>0\}$ and
$\mathbb{D}:=\{z\in\mathbb{C}:|z|<1\}$.  The Frobenius inner product is
$\langle U,V\rangle_F:=\operatorname{tr}(U^{\mathsf{T}}V)$.  
%inequalities are stated
%explicitly and are never denoted by the Loewner symbols.

The proof is partly due to CODEX-5.6 Sol Ultra, and partly due to extensive, informed trainings provided by the authors. To the best of our knowledge, this appears to be the first AI-assisted theoretical result published in the open controls literature, thus witnessing potentially the birth of AI in control systems theory. 

\section{Continuous-time problem and main result}
\label{sec:problem}

Let
\begin{equation*}
  a(s)=s^n+\sigma_1s^{n-1}+\sigma_2s^{n-2}+\cdots+\sigma_n
  \label{eq:hurwitz-polynomial}
\end{equation*}
be a real Hurwitz polynomial, and let
\begin{equation}
  A=
  \begin{bmatrix}
    0&1&&0\\
     &\ddots&\ddots&\\
    0&&0&1\\
    -\sigma_n&-\sigma_{n-1}&\cdots&-\sigma_1
  \end{bmatrix}\in\mathbb{R}^{n\times n}
  \label{eq:companion}
\end{equation}
be its controller companion matrix.  Thus the eigenvalues of $A$ are
the zeros of $a$ and lie in the open left half-plane.
Given $Q=Q^{\mathsf{T}}\succeq0$, consider
\begin{equation}
  XA+A^{\mathsf{T}}X=-Q.
  \label{eq:continuous-lyapunov}
\end{equation}
Standard Lyapunov theory \citep{Kailath1980,HornJohnson2013} gives the
unique solution
\begin{equation*}
  X=\int_0^\infty e^{A^{\mathsf{T}}t}Qe^{At}\,dt\succeq0.
  \label{eq:continuous-integral}
\end{equation*}

% \begin{theorem}[Companion Lyapunov entrywise nonnegativity]
\begin{theorem}
  \label{thm:continuous-main}
  Under the assumptions above, the unique symmetric solution of
  \eqref{eq:continuous-lyapunov} satisfies
  \[
    X_{k\ell}\geq0,
    \qquad 1\leq k,\ell\leq n.
  \]
  % No real-spectrum assumption is required, and the proof does not use
  % a spectral decomposition.
\end{theorem}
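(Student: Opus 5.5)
By linearity and the spectral decomposition $Q=\sum_i v_iv_i^{\mathsf T}$, I will first recast the claim in dual form, which removes the dependence on $Q$. For fixed $k,\ell$, let $E_{k\ell}:=\operatorname{sym}(e_ke_\ell^{\mathsf T})$, and let $Y_{k\ell}$ be the unique solution of the dual equation $AY+YA^{\mathsf T}=-E_{k\ell}$, that is, $Y_{k\ell}=\int_0^\infty e^{At}E_{k\ell}e^{A^{\mathsf T}t}\,dt$. Then
\[
X_{k\ell}=\langle E_{k\ell},X\rangle_F=\langle AY_{k\ell}+Y_{k\ell}A^{\mathsf T},-X\rangle_F=\langle Y_{k\ell},-(XA+A^{\mathsf T}X)\rangle_F=\langle Y_{k\ell},Q\rangle_F .
\]
Hence Theorem~\ref{thm:continuous-main} holds for all $Q\succeq0$ if and only if $Y_{k\ell}\succeq0$ for every pair $k,\ell$. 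The difficulty is that the forcing $E_{k\ell}$ is indefinite when $k\neq\ell$, so the Loewner argument does not apply directly. This reformulation also gives the strict version: $Q\succ0$ and $0\neq Y_{k\ell}\succeq0$ together imply $X_{k\ell}>0$.

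Next I will write $Y_{k\ell}$ in the frequency domain:
\[
Y_{k\ell}=\frac{1}{2\pi}\int_{-\infty}^{\infty}\operatorname{sym}\!\left[(j\omega I-A)^{-1}e_ke_\ell^{\mathsf T}(j\omega I-A)^{-*}\right]d\omega .
\]
I will use the companion structure to express the columns $(sI-A)^{-1}e_k=h_k(s)/a(s)$. Here the entries of the polynomial vector $h_k$ are built from the Horner polynomials $a_m(s)=s^m+\sigma_1s^{m-1}+\cdots+\sigma_m$ and from powers of $s$. For instance, $h_n(s)=[1,s,\dots,s^{n-1}]^{\mathsf T}$, and for general $k$, $h_k$ is a combination of $s^{i}$ and $a_m(s)$ obtained by the recursive solution of $(sI-A)x=e_k$. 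In these terms, $Y_{k\ell}\succeq0$ amounts to showing that
\[
\operatorname{Re}\int \frac{u^*h_k(j\omega)\,\overline{u^*h_\ell(j\omega)}}{|a(j\omega)|^2}\,d\omega\ \ge 0\qquad\text{for all } u.
\]

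To make this tractable, I will normalize by the controllability Gramian $W\succ0$ of the pair $(A,e_n)$, which solves $AW+WA^{\mathsf T}=-e_ne_n^{\mathsf T}$ and is invertible because the pair is controllable. Setting $F:=W^{-1/2}AW^{1/2}$ gives $F+F^{\mathsf T}=-W^{-1/2}e_ne_n^{\mathsf T}W^{-1/2}\preceq0$, so $-F$ is accretive. In the new coordinates, the Horner polynomials of $A$ evaluated at $F$, namely the blocks of $\operatorname{adj}(sI-F)=\sum_m a_{m}(s)F^{n-1-m}$-type expansions, appear as the coefficients of the adjugate polynomial of $sI-F$. The positivity of $Y_{k\ell}$ then reduces to a \emph{mixed positivity} statement: for an accretive real matrix $M=-F$, the symmetric parts of suitable products of the coefficients $C_i,C_j$ of $\operatorname{adj}(sI+M)=\sum_i C_is^{i}$, weighted by $W$, are positive semidefinite.

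The main obstacle is this adjugate lemma. I would prove it by induction on $n$. The step is a codimension-one compression of $M$ onto $e_n^{\perp}$ (in $W$-coordinates), which is again accretive. The rank-one determinant identity $\det(sI+M)=\det(sI+M_{\perp})\,(s+\mu(s))$, with $\mu$ a Schur complement that is a positive-real function, splits the coefficients into a product of a scalar determinant factor and a lower-dimensional quadratic term. The inductive hypothesis handles the quadratic term, and the positive-real property of $\mu$ handles the scalar factor. Once the lemma is in place, substituting back gives $Y_{k\ell}\succeq0$ and hence $X_{k\ell}\ge0$. For strictness, $Y_{k\ell}\neq0$ follows because its trace pairing with $I$ equals the $(k,\ell)$ entry of the $Q=I$ solution, which is nonzero by a genericity argument via controllability.
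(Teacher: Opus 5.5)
Your outline follows the paper's architecture: the dual reformulation $X_{k\ell}=\langle Y_{k\ell},Q\rangle_F$, the Gramian of $(A,e_n)$, accretive $-F$, Horner polynomials as adjugate coefficients, and a mixed-positivity lemma proved by induction. However, the two load-bearing steps are asserted rather than established.

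First, the claim that ``the positivity of $Y_{k\ell}$ then reduces to a mixed positivity statement'' needs the Horner-basis identity $h_p(A)e_n=e_{n-p}$. This follows from $h_{p+1}(s)=sh_p(s)+\sigma_{p+1}$ and $Ae_{n-p}=e_{n-p-1}-\sigma_{p+1}e_n$. Since $h_p(A)$ commutes with $A$, multiplying $AW+WA^{\mathsf{T}}=-e_ne_n^{\mathsf{T}}$ by $h_p(A)$ on the left and $h_q(A)^{\mathsf{T}}$ on the right, then symmetrizing, gives by uniqueness $Y_{k\ell}=\operatorname{sym}(h_p(A)Wh_q(A)^{\mathsf{T}})=W^{1/2}\operatorname{sym}(h_p(F)h_q(F)^{\mathsf{T}})W^{1/2}$, with $p=n-k$ and $q=n-\ell$. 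Only then does the lemma apply, with $h_p(F)$ the coefficient of $s^{n-1-p}$ in $\operatorname{adj}(sI-F)$. Your frequency-domain vectors $h_k(s)$ never supply this identity, and they are not used afterwards.

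Second, and more seriously, the inductive step you sketch would not prove the lemma. The lemma is a semidefiniteness claim, so you must control the two-variable kernel $x^{\mathsf{T}}\operatorname{adj}(sI+M)\operatorname{adj}(tI+M)^{\mathsf{T}}x$ for every test vector $x$. That means compressing onto $x^{\perp}$ for each $x$, not onto one fixed direction $e_n^{\perp}$ in $W$-coordinates. The factorization $\det(sI+M)=\det(sI+M_\perp)(s+\mu(s))$ involves only the determinant and says nothing about the matrix coefficients $C_i$. Positive-realness of $\mu$ yields no coefficient signs.

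What works is the following. Take $V$ an orthonormal basis of $x^{\perp}$, $M_\perp:=V^{\mathsf{T}}MV$ (again accretive) and $w:=V^{\mathsf{T}}M^{\mathsf{T}}x$. In the basis $[\,V\ x\,]$, the row $x^{\mathsf{T}}\operatorname{adj}(sI+M)$ becomes $[\,-w^{\mathsf{T}}\operatorname{adj}(sI+M_\perp)\ \ \det(sI+M_\perp)\,]$. Hence
\[
x^{\mathsf{T}}\operatorname{adj}(sI+M)\operatorname{adj}(tI+M)^{\mathsf{T}}x=\det(sI+M_\perp)\det(tI+M_\perp)+w^{\mathsf{T}}\operatorname{adj}(sI+M_\perp)\operatorname{adj}(tI+M_\perp)^{\mathsf{T}}w .
\]
The paper reaches the same decomposition through the Cauchy--Binet identity $x^{\mathsf{T}}\operatorname{adj}(M)\operatorname{adj}(N)^{\mathsf{T}}x=\det(V^{\mathsf{T}}M^{\mathsf{T}}NV)$ followed by a rank-one determinant update. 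Now compare coefficients of $s^it^j$:
\begin{itemize}
\item In the first term, the coefficient is a product of coefficients of $\det(sI+M_\perp)$. These are nonnegative because the spectrum of $M_\perp$ lies in the closed right half-plane, not because of any positive-real property of $\mu$.
\item The second term is nonnegative by the induction hypothesis.
\end{itemize}

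Finally, for the strict version, $Y_{k\ell}\neq0$ simply because $E_{k\ell}\neq0$. A ``genericity'' argument cannot cover every Hurwitz $A$.
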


The entrywise conclusion is coordinate-dependent: it holds in
the controller companion coordinates of \eqref{eq:companion} and is
not invariant under general similarity transformations.  Nonreal
eigenvalues are allowed through conjugate pairs, but both $A$ and $Q$ must be
real.
% ; no claim is made for companion matrices with complex coefficients.

% \begin{corollary}[Strictly positive-definite forcing]
\begin{corollary}
  \label{cor:strict-continuous}
  Under the assumptions of Theorem~\ref{thm:continuous-main}, if
  $Q\succ0$, then
  \[
    X_{k\ell}>0,
    \qquad 1\leq k,\ell\leq n.
  \]
  Thus $X$ is entrywise strictly positive as well as positive
  definite.
\end{corollary}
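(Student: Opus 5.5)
Corollary~\ref{cor:strict-continuous} will be reduced to Theorem~\ref{thm:continuous-main} by a perturbation (splitting) argument, plus a separate check that the rank-one forcing $I$-type term already gives strict positivity.

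Since $Q\succ0$, choose $\epsilon>0$ with $Q-\epsilon I\succeq0$. By linearity and uniqueness of the solution of \eqref{eq:continuous-lyapunov}, write $X=X_1+\epsilon X_I$. Here $X_1$ solves the equation with forcing $Q-\epsilon I\succeq0$, and $X_I$ solves it with forcing $I$. Theorem~\ref{thm:continuous-main} gives $X_1\geq0$ entrywise. It therefore suffices to show that $X_I$ is entrywise strictly positive.

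For $X_I$, decompose further as $I=\sum_{j=1}^n e_je_j^{\mathsf{T}}$, so that $X_I=\sum_j X^{(j)}$. Each $X^{(j)}$ solves the equation with forcing $e_je_j^{\mathsf{T}}\succeq0$ and is entrywise nonnegative by the theorem. Hence it is enough to find, for each pair $(k,\ell)$, a single index $j$ with $X^{(j)}_{k\ell}>0$.

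The natural candidate is $j=1$. Then $X^{(1)}=\int_0^\infty e^{A^{\mathsf{T}}t}e_1e_1^{\mathsf{T}}e^{At}\,dt$ is the observability Gramian of $(A,e_1^{\mathsf{T}})$. In companion coordinates its entries are
\[
  X^{(1)}_{k\ell}=\int_0^\infty y^{(k-1)}(t)\,y^{(\ell-1)}(t)\,dt ,
\]
where $y$ is the free response of the ODE $a(d/dt)y=0$. The Gramian involves all such responses, so as a quadratic form $x^{\mathsf{T}}X^{(1)}x$ integrates $y^2$ over trajectories with initial derivatives $x$. Diagonal entries are clearly positive. Off-diagonal entries with $k+\ell$ odd equal $\pm\tfrac12 y^{(m)}(0)^2$-type boundary terms after integration by parts, which are not obviously nonzero for all $x$.

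So rather than relying on a single $j$, I would argue by contradiction. Suppose $(X_I)_{k\ell}=0$. Then every $X^{(j)}_{k\ell}=0$, and, refining further, the same holds for every rank-one forcing $vv^{\mathsf{T}}$ with $v\geq0$ near $e_j$. More robustly, the map $Q\mapsto X_{k\ell}$ is a linear functional that is nonnegative on the PSD cone (by the theorem). Therefore $X_{k\ell}=\langle G_{k\ell},Q\rangle_F$ with $G_{k\ell}=\int_0^\infty e^{At}\operatorname{sym}(e_ke_\ell^{\mathsf{T}})e^{A^{\mathsf{T}}t}dt\succeq0$.

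Strict positivity for all $Q\succ0$ is then equivalent to $G_{k\ell}\neq0$. Indeed, $\langle G,Q\rangle_F>0$ for a nonzero PSD $G$ and $Q\succ0$. And $G_{k\ell}\neq0$ holds because $G_{k\ell}$ solves $AG+GA^{\mathsf{T}}=-\operatorname{sym}(e_ke_\ell^{\mathsf{T}})\neq0$, so it cannot vanish.

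This last observation makes the whole argument short. It needs only that $G_{k\ell}\succeq0$, which is precisely the dual form of Theorem~\ref{thm:continuous-main}: nonnegativity on all PSD $Q$ forces $G_{k\ell}$ into the self-dual PSD cone. The only care needed is to justify the duality identity $X_{k\ell}=\langle G_{k\ell},Q\rangle_F$ via the integral representation and the trace cyclicity. I expect that identification to be the main step, though it is routine.
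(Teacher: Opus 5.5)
Your final argument is correct and is essentially the paper's proof: your $G_{k\ell}$ is exactly the paper's $W_{k\ell}$, and both conclude $X_{k\ell}=\operatorname{tr}(QW_{k\ell})\geq\lambda_{\min}(Q)\operatorname{tr}(W_{k\ell})>0$ from $W_{k\ell}\succeq0$ together with $W_{k\ell}\neq0$ (because $S_{k\ell}\neq0$). The one real difference is that you obtain $W_{k\ell}\succeq0$ from the statement of Theorem~\ref{thm:continuous-main} by testing $Q=vv^{\mathsf{T}}$, whereas the paper reads it off \eqref{eq:W-positive}; the opening splitting and observability-Gramian detour are superfluous and should be cut, and your entry formula for $X^{(1)}_{k\ell}$ there is wrong, since $e_1^{\mathsf{T}}e^{At}e_k$ is not the $(k-1)$st derivative of a single free response.
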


% The proof rests on a positivity property of adjugate coefficients.  We
% give two complementary proofs: one by dimension reduction and one by a
% rational zero-exclusion argument.
The proof relies on a mixed positivity property of adjugate
coefficients, which we establish by induction on the matrix dimension.

\section{Adjugate coefficients: a dimension-reduction proof}
\label{sec:adjugate}
We begin with the cofactor compression identity underlying the
dimension-reduction argument.
% We begin with the compression identity shared by both proofs.

% \begin{lemma}[Cofactor compression identity]
\begin{lemma}
  \label{lem:cofactor}
  Let $x\in\mathbb{R}^m$ be a unit vector, and let
  $V\in\mathbb{R}^{m\times(m-1)}$ satisfy
  \[
    V^{\mathsf{T}}V=I,
    \qquad V^{\mathsf{T}}x=0.
  \]
  For arbitrary $M,N\in\mathbb{C}^{m\times m}$,
  \begin{equation}
    x^{\mathsf{T}}\operatorname{adj}(M)\operatorname{adj}(N)^{\mathsf{T}}x
    =\det\!\left(V^{\mathsf{T}}M^{\mathsf{T}}NV\right).
    \label{eq:cofactor-identity}
  \end{equation}
  % Here ${}^{\mathsf{T}}$ is the algebraic transpose, with no complex
  % conjugation.
\end{lemma}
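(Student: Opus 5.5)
The plan is to recognize both sides as $(m-1)\times(m-1)$ minors and apply Cauchy--Binet. Complete $x$ to an orthogonal matrix $U=[\,x\;\;V\,]\in\mathbb{R}^{m\times m}$, which is possible since $x$ is a unit vector, $V$ has orthonormal columns and $V^{\mathsf{T}}x=0$. By a density argument (both sides are polynomial in the entries of $M,N$), it suffices to assume $M,N$ invertible; the identity then extends to all complex $M,N$ by continuity. Under invertibility, $\operatorname{adj}(M)=\det(M)M^{-1}$, and the left side becomes $\det M\det N\,x^{\mathsf{T}}M^{-1}N^{-\mathsf{T}}x=\det(M^{\mathsf{T}}N)\,\bigl[(N^{\mathsf{T}}M)^{-1}\bigr]_{xx}$, where $[\cdot]_{xx}$ denotes $x^{\mathsf{T}}(\cdot)x$ and we use $M^{-1}N^{-\mathsf{T}}=(N^{\mathsf{T}}M)^{-1}$.

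Next, set $P:=M^{\mathsf{T}}N$, so the right side is $\det(V^{\mathsf{T}}PV)$. Note $x^{\mathsf{T}}(N^{\mathsf{T}}M)^{-1}x=x^{\mathsf{T}}(P^{\mathsf{T}})^{-1}x=x^{\mathsf{T}}P^{-1}x$, since a scalar equals its transpose. So the claim reduces to
\[
\det(P)\,x^{\mathsf{T}}P^{-1}x=\det(V^{\mathsf{T}}PV),
\]
i.e.\ $x^{\mathsf{T}}\operatorname{adj}(P)x=\det(V^{\mathsf{T}}PV)$. Conjugate by $U$: let $\tilde P=U^{\mathsf{T}}PU$. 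Then $\det\tilde P=\det P$ (as $\det U^2=1$), $\tilde P^{-1}=U^{\mathsf{T}}P^{-1}U$, so $x^{\mathsf{T}}P^{-1}x=(\tilde P^{-1})_{11}$. By the cofactor formula, $\det(\tilde P)(\tilde P^{-1})_{11}$ is the $(1,1)$ cofactor of $\tilde P$, namely the determinant of its trailing $(m-1)\times(m-1)$ block, which is exactly $V^{\mathsf{T}}PV$. This proves the identity for invertible $M,N$.

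The only delicate point is the extension to singular $M,N$ and the bookkeeping of transposes versus conjugate transposes. Since only the algebraic transpose appears and $U$ is real, all steps are valid over $\mathbb{C}$. The continuity extension works because $\operatorname{adj}$ and $\det$ are polynomial and invertible pairs are dense in $\mathbb{C}^{m\times m}\times\mathbb{C}^{m\times m}$. Alternatively, one could avoid density by using $\operatorname{adj}(M)\operatorname{adj}(N)^{\mathsf{T}}=\operatorname{adj}(M)\operatorname{adj}(N^{\mathsf{T}})=\operatorname{adj}(N^{\mathsf{T}}M)$ together with $\operatorname{adj}(U^{\mathsf{T}}PU)=U^{\mathsf{T}}\operatorname{adj}(P)U$ for orthogonal $U$ with $\det U=\pm1$ (the sign cancels because it appears twice). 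Both identities hold for all matrices, and the final step is again the $(1,1)$ cofactor. I would present this adjugate-multiplicativity route, with density only as a justification of $\operatorname{adj}(AB)=\operatorname{adj}(B)\operatorname{adj}(A)$.
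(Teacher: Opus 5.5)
Your proof is correct, but it takes a different route from the paper's.

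\textbf{The paper's route.} The paper keeps $M$ and $N$ separate. It first rotates to $x=e_m$, $V=[\,e_1\ \cdots\ e_{m-1}\,]$ using $\operatorname{adj}(U^{\mathsf{T}}MU)=U^{\mathsf{T}}\operatorname{adj}(M)U$. It then applies Cauchy--Binet to the rectangular product $(MV)^{\mathsf{T}}(NV)$. Each maximal minor $\det((MV)_{\widehat j})$ is identified with $\pm(\operatorname{adj}M)_{mj}$. So the determinant becomes the inner product of the last rows of the two adjugates.

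\textbf{Your route.} You first merge the two adjugates, $\operatorname{adj}(M)\operatorname{adj}(N)^{\mathsf{T}}=\operatorname{adj}(N^{\mathsf{T}}M)$. You then note that the quadratic form is unchanged when $N^{\mathsf{T}}M$ is replaced by its transpose $P=M^{\mathsf{T}}N$. This reduces the lemma to the single-matrix identity $x^{\mathsf{T}}\operatorname{adj}(P)x=\det(V^{\mathsf{T}}PV)$, which is just the $(1,1)$ cofactor of $U^{\mathsf{T}}PU$.

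\textbf{Comparison.} Your version is shorter and more conceptual. It shows the lemma is simply ``a diagonal entry of an adjugate equals the complementary principal minor,'' applied to $P=M^{\mathsf{T}}N$. However, it hands the real content to the multiplicativity $\operatorname{adj}(AB)=\operatorname{adj}(B)\operatorname{adj}(A)$. That identity is itself Cauchy--Binet for $(m-1)$-minors, or is proved by density as you do. The paper carries out that Cauchy--Binet step explicitly, so its proof is density-free and self-contained. The paper also invokes the same adjugate multiplicativity later, in the proof of Lemma~\ref{lem:adjugate}, so your route assumes nothing the paper does not already use.

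One presentational point: your opening sentence announces Cauchy--Binet, but the argument you actually give never uses it. You should drop or rephrase that framing.
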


\begin{proof}
Let $U=[\,V\ x\,]$, which is orthogonal.  The identity
$\operatorname{adj}(U^{\mathsf{T}}MU)=U^{\mathsf{T}}\operatorname{adj}(M)U$ shows that, after this change of
basis, it suffices to take $x=e_m$ and
$V=[\,e_1\ \cdots\ e_{m-1}\,]$.  Cauchy--Binet formula \citep{HornJohnson2013} gives
\[
  \det((MV)^{\mathsf{T}}NV)
  =\sum_{j=1}^{m}
    \det((MV)_{\widehat j})\det((NV)_{\widehat j}),
\]
where $\widehat j$ denotes deletion of row $j$. 
% Each summand is the
% product of the minors of $M$ and $N$ obtained by deleting row $j$ and
% column $m$.  After multiplication by the common cofactor sign
% $(-1)^{j+m}$, these are the $j$th entries of the last rows of
% $\operatorname{adj}(M)$ and $\operatorname{adj}(N)$, respectively.  The signs cancel in their
% product, giving the left-hand side of \eqref{eq:cofactor-identity}.
Since $(MV)_{\widehat j}$ is obtained from $M$ by deleting
row $j$ and column $m$, the cofactor formula yields
\[
\bigl(\operatorname{adj}(M)\bigr)_{mj}
=
(-1)^{j+m}\det\!\left((MV)_{\widehat j}\right),
\]
and the same relation holds for $N$. Substituting these
expressions and cancelling the cofactor signs, we obtain
\begin{align*}
\det\!\left(V^{\mathsf{T}}M^{\mathsf{T}}NV\right)
&= \sum_{j=1}^{m}
\bigl(\operatorname{adj}(M)\bigr)_{mj}
\bigl(\operatorname{adj}(N)\bigr)_{mj}\\
&= e_m^{\mathsf{T}}\operatorname{adj}(M)
\operatorname{adj}(N)^{\mathsf{T}}e_m.
\end{align*}

This proves \eqref{eq:cofactor-identity}.
% The argument is polynomial and requires no nonsingularity assumption.
\end{proof}

% \begin{lemma}[Mixed adjugate-coefficient positivity]
\begin{lemma}
  \label{lem:adjugate}
  Let $C\in\mathbb{R}^{m\times m}$ satisfy
  \[
    C+C^{\mathsf{T}}\succeq0.
  \]
  Define $T_0(C),\ldots,T_{m-1}(C)$ by
  \begin{equation*}
    J_C(t):=\operatorname{adj}(I+tC)=\sum_{p=0}^{m-1}t^pT_p(C).
    \label{eq:adjugate-polynomial}
  \end{equation*}
  Then
  \begin{equation}
    \operatorname{sym}\!\left(T_p(C)T_q(C)^{\mathsf{T}}\right)\succeq0,
    \qquad 0\leq p,q\leq m-1.
    \label{eq:mixed-positivity}
  \end{equation}
  % Strict accretivity, invertibility, and diagonalizability are not
  % required.
\end{lemma}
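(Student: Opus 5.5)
We reduce \eqref{eq:mixed-positivity} to a scalar statement. The matrix $\operatorname{sym}(T_pT_q^{\mathsf{T}})$ is positive semidefinite precisely when $x^{\mathsf{T}}T_p(C)T_q(C)^{\mathsf{T}}x\ge0$ for every unit vector $x\in\mathbb{R}^m$. Fix such an $x$ and complete it to an orthogonal $U=[\,V\ x\,]$. Apply Lemma~\ref{lem:cofactor} with $M=I+sC$ and $N=I+tC$, where $s,t$ are independent scalar indeterminates. This gives the generating-function identity
\[
  \sum_{p,q}s^pt^q\,x^{\mathsf{T}}T_pT_q^{\mathsf{T}}x
  =\det\!\bigl(V^{\mathsf{T}}(I+sC^{\mathsf{T}})(I+tC)V\bigr).
\]
The goal is therefore to show that every coefficient of the bivariate polynomial
\[
  \Phi_m(s,t):=\det\!\bigl(I+sV^{\mathsf{T}}C^{\mathsf{T}}V+tV^{\mathsf{T}}CV+st\,V^{\mathsf{T}}C^{\mathsf{T}}CV\bigr)
\]
is nonnegative. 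Here we used $V^{\mathsf{T}}V=I$.

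Next we rewrite the middle matrix to expose a lower-dimensional accretive matrix. Set $C_1:=V^{\mathsf{T}}CV\in\mathbb{R}^{(m-1)\times(m-1)}$. It inherits $C_1+C_1^{\mathsf{T}}\succeq0$. Since $VV^{\mathsf{T}}=I-xx^{\mathsf{T}}$, we have
\[
  V^{\mathsf{T}}C^{\mathsf{T}}CV=C_1^{\mathsf{T}}C_1+uu^{\mathsf{T}},
  \qquad u:=V^{\mathsf{T}}C^{\mathsf{T}}x .
\]
Hence the matrix inside $\Phi_m$ equals
\[
  (I+sC_1^{\mathsf{T}})(I+tC_1)+st\,uu^{\mathsf{T}}.
\]
The matrix determinant lemma, in its adjugate form so that no invertibility is needed, then gives
\[
  \Phi_m(s,t)=\det(I+sC_1)\det(I+tC_1)+st\,u^{\mathsf{T}}\operatorname{adj}(I+tC_1)\operatorname{adj}(I+sC_1)^{\mathsf{T}}u .
\]
The second term is $st\sum_{p,q}s^qt^p\,u^{\mathsf{T}}T_p(C_1)T_q(C_1)^{\mathsf{T}}u$. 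By the induction hypothesis applied to the $(m-1)$-dimensional accretive matrix $C_1$, its coefficients are $u^{\mathsf{T}}\operatorname{sym}(T_pT_q^{\mathsf{T}})u\ge0$. Symmetrization suffices because the coefficients of $s^qt^p$ pair as a scalar and its transpose only after combining; more precisely, the $(p,q)$ coefficient $u^{\mathsf{T}}T_pT_q^{\mathsf{T}}u$ is a scalar equal to $u^{\mathsf{T}}\operatorname{sym}(T_pT_q^{\mathsf{T}})u$. The base case $m=1$ is trivial: $T_0=1$.

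The first term factors as $c(s)c(t)$, where $c(z)=\det(I+zC_1)$. So it remains to show that the characteristic-type polynomial $c$ of an accretive real matrix has nonnegative coefficients. This is the one genuinely analytic input. The coefficients of $c$ are the elementary symmetric functions of the eigenvalues of $C_1$, and these eigenvalues lie in the closed right half-plane because the numerical range of $C_1$ lies there. Since $C_1$ is real, nonreal eigenvalues come in conjugate pairs. Thus $c$ factors into real terms $(1+\lambda z)$ with $\lambda\ge0$ and quadratics $1+2\operatorname{Re}\lambda\,z+|\lambda|^2z^2$ with $\operatorname{Re}\lambda\ge0$, all of which have nonnegative coefficients. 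Products of such factors keep nonnegative coefficients, so $c(s)c(t)$ has nonnegative coefficients and the induction closes.

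The main obstacle is the bookkeeping in the determinant-lemma step. We must check that the rank-one correction really produces the mixed products $T_p(C_1)T_q(C_1)^{\mathsf{T}}$, with the correct placement of transposes and of the variables $s$ versus $t$, and not some other combination not covered by the hypothesis. If the transposes come out swapped, I would instead apply the induction to $C_1^{\mathsf{T}}$, which is also accretive, using $T_p(C_1^{\mathsf{T}})=T_p(C_1)^{\mathsf{T}}$. Finally, semidefiniteness of $C+C^{\mathsf{T}}$ rather than definiteness causes no trouble, since every identity used is polynomial.
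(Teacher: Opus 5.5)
Your proposal is correct and follows essentially the same route as the paper: fix a unit $x$, apply the cofactor compression identity, write the compressed matrix as $(I+sC_1^{\mathsf{T}})(I+tC_1)+st\,uu^{\mathsf{T}}$ with $C_1=V^{\mathsf{T}}CV$ and $u=V^{\mathsf{T}}C^{\mathsf{T}}x$ (the paper's $B$ and $w$), split it by the adjugate form of the rank-one determinant identity, and close the induction using the nonnegative coefficients of $\det(I+zC_1)$ from eigenvalue location and conjugate pairing. Your transposes and variables come out right as written: the coefficient of $s^pt^q$ is $c_pc_q+u^{\mathsf{T}}T_{q-1}(C_1)T_{p-1}(C_1)^{\mathsf{T}}u$, which the induction hypothesis covers directly, so the fallback to $C_1^{\mathsf{T}}$ is unnecessary.
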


When $p=0$ and $q=m-1$, one has $T_0(C)=I$ and
$T_{m-1}(C)=\operatorname{adj}(C)$.  Thus Lemma~\ref{lem:adjugate} recovers the
accretivity of $\operatorname{adj}(C)$ proved by \citet[Proposition~4.3]{Zhang2026};
the mixed statement for arbitrary $p$ and $q$ is the extension needed
below.

% \begin{proof}[First proof: dimension reduction]
\begin{proof}
We proceed by induction on $m$.  For $m=1$,
$J_C(t)=1$ and the assertion is immediate.  Suppose that the result holds in
dimension $m-1$. Fix a unit vector $x\in\mathbb{R}^m$, and choose $V$ as in
Lemma~\ref{lem:cofactor}.  Set
\begin{equation*}
  B:=V^{\mathsf{T}}CV,
  \qquad
  w:=V^{\mathsf{T}}C^{\mathsf{T}}x.
  \label{eq:induction-B-w}
\end{equation*}
Using $VV^{\mathsf{T}}+xx^{\mathsf{T}}=I$ and
$w^{\mathsf{T}}=x^{\mathsf{T}}CV$, we obtain
\begin{equation}
  CV=VB+xw^{\mathsf{T}},
  \qquad
  B+B^{\mathsf{T}}=V^{\mathsf{T}}(C+C^{\mathsf{T}})V\succeq0.
  \label{eq:induction-compression}
\end{equation}
% The first identity follows from $VV^{\mathsf{T}}+xx^{\mathsf{T}}=I$ and
% $w^{\mathsf{T}}=x^{\mathsf{T}}CV$.
Thus $B$ is a real accretive matrix of size $m-1$.

Let $t,s$ be independent formal variables, and define
\begin{equation*}
  G_x(t,s):=x^{\mathsf{T}}J_C(t)J_C(s)^{\mathsf{T}}x.
  \label{eq:Gx}
\end{equation*}
Applying Lemma~\ref{lem:cofactor} with $M=I+tC$ and $N=I+sC$, and using
\eqref{eq:induction-compression}, yields
\begin{equation*}
  G_x(t,s)
  =\det\!\left((I+tB)^{\mathsf{T}}(I+sB)+tsww^{\mathsf{T}}\right).
  \label{eq:Gx-determinant}
\end{equation*}
Set
\[
  d_B(z):=\det(I+zB),
  \qquad
  J_B(z):=\operatorname{adj}(I+zB).
\]
The rank-one determinant identity
$\det(K+uv^{\mathsf{T}})=\det K+v^{\mathsf{T}}\operatorname{adj}(K)u$ together with
$\operatorname{adj}(PQ)=\operatorname{adj}(Q)\operatorname{adj}(P)$, gives
\begin{equation}
  G_x(t,s)
  =d_B(t)d_B(s)
   +ts\,w^{\mathsf{T}}J_B(s)J_B(t)^{\mathsf{T}}w.
  \label{eq:induction-decomposition}
\end{equation}
% These identities remain valid when the matrices involved are singular,
% because both sides are polynomial in their entries.
Both identities are polynomial in the matrix entries and
remain valid when the matrices involved are singular.

Write
\begin{equation}
  d_B(z)=\sum_{r=0}^{m-1}b_rz^r.
  \label{eq:dB-coefficients}
\end{equation}
Every $b_r$ is nonnegative.  Indeed, if $Bv=\lambda v$ for
$0\neq v\in\mathbb{C}^{m-1}$, then
\[
  2\operatorname{Re}(\lambda)\lVert v\rVert^2
  =v^*(B+B^{\mathsf{T}})v\geq0.
\]
Here the inequality follows by applying
$B+B^{\mathsf{T}}\succeq0$ to the real and imaginary parts
of $v$ separately.
Thus every eigenvalue of $B$ lies in the closed right half-plane.

% Here a real symmetric positive-semidefinite matrix remains
% positive semidefinite on complex vectors, since the quadratic form
% separates into the real and imaginary parts of $v$.
% Hence every eigenvalue of $B$ lies in the closed right half-plane. 
Since $B$ is real, its nonreal eigenvalues occur in conjugate
pairs. In the eigenvalue factorization of $d_B$, each real
eigenvalue $\lambda\geq0$ contributes a factor $1+\lambda z$,
while each nonreal conjugate pair $\alpha\pm i\beta$
contributes
% A real eigenvalue $\lambda$ contributes a factor $1+\lambda z$ with
% $\lambda\geq0$ to $d_B$, while a nonreal conjugate pair
% $\alpha\pm i\beta$ contributes
\[
  1+2\alpha z+(\alpha^2+\beta^2)z^2,
  \qquad \alpha\geq0.
\]
All these factors have nonnegative coefficients, proving
$b_r\geq0$ for every $r$.
% The product of these factors has nonnegative coefficients.  This uses
% only algebraic multiplicities and does not require diagonalizability.
% The rank-one determinant identity, together with
% $\operatorname{adj}(PQ)=\operatorname{adj}(Q)\operatorname{adj}(P)$,
% % both valid also for singular matrices, 
% gives
% \begin{equation}
%   G_x(t,s)
%   =d_B(t)d_B(s)
%    +ts\,w^{\mathsf{T}}J_B(s)J_B(t)^{\mathsf{T}}w.
%   \label{eq:induction-decomposition}
% \end{equation}
% Write
% \begin{equation*}
%   d_B(z)=\sum_{r=0}^{m-1}b_rz^r.
%   \label{eq:dB-coefficients}
% \end{equation*}
% It follows that 
% \begin{equation*}
% d_B(z) = \prod^{m-1}_{i=1}  \lambda_i(I+zB)=\prod^{m-1}_{i=1} (1+z\lambda_i(B)). 
% \end{equation*}
% In view of the fact that $B\succeq 0$, $\lambda_i(B)\geq 0$ for all $i=1,~\cdots,~m-1$, and hence  
% %every eigenvalue of $B$
% %has nonnegative real part. As $B$ is real, its real eigenvalues
% %and nonreal conjugate pairs contribute factors with nonnegative
% %coefficients to $\det(I+zB)$. Hence 
% $b_r\geq0$ for all $r$.

Next, expand
\[
  J_B(z)=\sum_{r=0}^{m-2}z^rT_r(B),
\]
and set $T_r(B)=0$ outside $0\leq r\leq m-2$ and $b_r=0$ outside
$0\leq r\leq m-1$.  Comparing the coefficient of $t^ps^q$ in
\eqref{eq:induction-decomposition}, we obtain, for
$0\leq p,q\leq m-1$,
\begin{equation*}
  x^{\mathsf{T}}T_p(C)T_q(C)^{\mathsf{T}}x
  =b_pb_q
   +w^{\mathsf{T}}T_{q-1}(B)T_{p-1}(B)^{\mathsf{T}}w.
  \label{eq:induction-coefficient}
\end{equation*}
% The first term is nonnegative.  The second is nonnegative by the
% induction hypothesis, since a real quadratic form depends only on the
% symmetric part of its matrix.  Therefore the left-hand side of
The first term on the right-hand side is nonnegative.
The second is nonnegative by the induction hypothesis, since a real quadratic form depends only on the
symmetric part of its matrix.
Consequently,
% \eqref{eq:induction-coefficient} is nonnegative.  Finally,
\[
  x^{\mathsf{T}}T_p(C)T_q(C)^{\mathsf{T}}x
  =x^{\mathsf{T}}\operatorname{sym}\!\left(T_p(C)T_q(C)^{\mathsf{T}}\right)x.
\]
Since $x$ was arbitrary, \eqref{eq:mixed-positivity} follows, completing the 
induction.
\end{proof}

\section{Proof of the continuous-time results}
\label{sec:continuous-proof}

For $0\leq p\leq n-1$, define the Horner polynomials
\begin{equation}
  h_p(s):=s^p+\sigma_1s^{p-1}+\cdots+\sigma_p,
  \qquad h_0(s):=1.
  \label{eq:horner}
\end{equation}
The companion structure yields
\begin{equation}
  h_p(A)e_n=e_{n-p},
  \qquad 0\leq p\leq n-1.
  \label{eq:horner-basis}
\end{equation}
Indeed, this follows recursively from
$h_{p+1}(s)=sh_p(s)+\sigma_{p+1}$ and
$Ae_{n-p}=e_{n-p-1}-\sigma_{p+1}e_n$ for $0\leq p\leq n-2$.
In particular,
$e_n,Ae_n,\ldots,A^{n-1}e_n$ span $\mathbb{R}^n$.

Consider the controllability Gramian
\begin{equation}
  G:=\int_0^\infty
  e^{At}e_ne_n^{\mathsf{T}}e^{A^{\mathsf{T}}t}\,dt.
  \label{eq:gramian}
\end{equation}
It satisfies
\begin{equation}
  AG+GA^{\mathsf{T}}=-e_ne_n^{\mathsf{T}}.
  \label{eq:gramian-equation}
\end{equation}
Equation~\eqref{eq:horner-basis} shows that $(A,e_n)$ is controllable,
because the monicity of the $h_p$ makes the change between the Horner
vectors and $e_n,Ae_n,\ldots,A^{n-1}e_n$ unit triangular.  Hence
$G\succ0$.  Hurwitz stability makes \eqref{eq:gramian} convergent,
and integrating the derivative of its integrand gives
\eqref{eq:gramian-equation}.  Let $R=G^{1/2}=R^{\mathsf{T}}\succ0$ and set
\begin{equation}
  D:=R^{-1}AR,
  \qquad C:=-D.
  \label{eq:balanced-matrices}
\end{equation}
Congruencing \eqref{eq:gramian-equation} by $R^{-1}$ gives
\begin{equation*}
  D+D^{\mathsf{T}}
  =-(R^{-1}e_n)(R^{-1}e_n)^{\mathsf{T}}\preceq0.
  \label{eq:dissipative-D}
\end{equation*}
Consequently, $C+C^{\mathsf{T}}\succeq0$.

Since $D$ is similar to $A$ and $C=-D$,
\begin{equation*}
  \begin{aligned}
    \det(I+tC)
    &=t^n\det(t^{-1}I-D)=t^na(t^{-1})\\
    &=1+\sigma_1t+\cdots+\sigma_nt^n.
  \end{aligned}
  \label{eq:det-reversal}
\end{equation*}
The identity is first obtained for $t\neq0$ and then holds at $t=0$
by polynomial continuation.  Comparing coefficients in
\[
  (I+tC)\operatorname{adj}(I+tC)=\det(I+tC)I
\]
gives, for $1\leq p\leq n-1$,
\begin{equation*}
  \begin{aligned}
    T_0(C)&=I,\\
    T_p(C)&=\sigma_pI-CT_{p-1}(C)\\
          &=\sigma_pI+DT_{p-1}(C).
  \end{aligned}
  \label{eq:T-recurrence}
\end{equation*}
Comparison with \eqref{eq:horner} shows that
\begin{equation*}
  T_p(C)=h_p(D),
  \qquad 0\leq p\leq n-1.
  \label{eq:T-horner}
\end{equation*}
Lemma~\ref{lem:adjugate} therefore implies
\begin{equation}
  \operatorname{sym}\!\left(h_p(D)h_q(D)^{\mathsf{T}}\right)\succeq0,
  \qquad 0\leq p,q\leq n-1.
  \label{eq:horner-positivity}
\end{equation}

\begin{proof}[Proof of Theorem~\ref{thm:continuous-main}]
Fix $1\leq k,\ell\leq n$, and set
\[
  S_{k\ell}:=\operatorname{sym}(e_ke_\ell^{\mathsf{T}}).
\]
Since $A$ is Hurwitz, the Lyapunov operator
$W\mapsto AW+WA^{\mathsf{T}}$ is invertible.  Let $W_{k\ell}$ be the unique
symmetric solution of the adjoint
Lyapunov equation
\begin{equation}
  AW_{k\ell}+W_{k\ell}A^{\mathsf{T}}=-S_{k\ell}.
  \label{eq:adjoint-lyapunov}
\end{equation}
Set $p=n-k$, $q=n-\ell$, and
$Z_{pq}:=h_p(A)Gh_q(A)^{\mathsf{T}}$.  
% Since polynomials in $A$ commute with
Since $h_p(A)$ and $h_q(A)$ commute with
$A$, \eqref{eq:gramian-equation} and \eqref{eq:horner-basis} imply
\begin{equation*}
  \begin{aligned}
    &A\operatorname{sym}(Z_{pq})
      +\operatorname{sym}(Z_{pq})A^{\mathsf{T}}\\
    &\quad=\operatorname{sym}\!\left(
      h_p(A)(AG+GA^{\mathsf{T}})h_q(A)^{\mathsf{T}}\right)\\
    &\quad=-\operatorname{sym}(e_ke_\ell^{\mathsf{T}})=-S_{k\ell}.
  \end{aligned}
\end{equation*}
Uniqueness in \eqref{eq:adjoint-lyapunov} and
\eqref{eq:balanced-matrices} thus yield
\begin{align}
  W_{k\ell}
  &=\operatorname{sym}\!\left(h_p(A)Gh_q(A)^{\mathsf{T}}\right)\notag\\
  &=R\operatorname{sym}\!\left(h_p(D)h_q(D)^{\mathsf{T}}\right)R
    \succeq0
  \label{eq:W-positive}
\end{align}
by \eqref{eq:horner-positivity}.

Taking the Frobenius inner product of
\eqref{eq:continuous-lyapunov} with $W_{k\ell}$ and using
\eqref{eq:adjoint-lyapunov}, we obtain
\begin{align*}
  -\operatorname{tr}(QW_{k\ell})
  &=\langle XA+A^{\mathsf{T}}X,W_{k\ell}\rangle_F\\
  &=\langle X,AW_{k\ell}+W_{k\ell}A^{\mathsf{T}}\rangle_F\\
  &=-\langle X,S_{k\ell}\rangle_F=-X_{k\ell}.
\end{align*}
Therefore
\begin{equation}
  X_{k\ell}=\operatorname{tr}(QW_{k\ell})=\operatorname{tr}(Q^{1/2}W_{k\ell}Q^{1/2})\geq0.
  \label{eq:entry-trace}
\end{equation}
%because $Q,W_{k\ell}\succeq0$.  This holds for every $k,\ell$.
\end{proof}

\begin{proof}[Proof of Corollary~\ref{cor:strict-continuous}]
Fix $1\leq k,\ell\leq n$.
Since $S_{k\ell}\neq0$, equation
\eqref{eq:adjoint-lyapunov} implies that $W_{k\ell}\neq0$.
% The matrix $W_{k\ell}$ in \eqref{eq:adjoint-lyapunov} is nonzero,
% because $S_{k\ell}\neq0$.  
Together with \eqref{eq:W-positive}, this gives
$\operatorname{tr}(W_{k\ell})>0$.
% It is also positive semidefinite by
% \eqref{eq:W-positive}, and therefore $\operatorname{tr}(W_{k\ell})>0$.  
If
$Q\succ0$, then
$Q\succeq\lambda_{\min}(Q)I$ with $\lambda_{\min}(Q)>0$. Hence \eqref{eq:entry-trace} yields 
% it follows from the von Neumann inequality \cite{HornJohnson2013} that 
\[
  X_{k\ell} = \operatorname{tr}(QW_{k\ell})
  =\operatorname{tr}\!\left(W_{k\ell}^{1/2}QW_{k\ell}^{1/2}\right)
  \geq\lambda_{\min}(Q)\operatorname{tr}(W_{k\ell})>0.
\]
Since $k$ and $\ell$ were arbitrary, every entry of $X$
is strictly positive.
% The proof then follows, likewise, from \eqref{eq:entry-trace}.
\end{proof}

\section{The discrete-time Lyapunov equation}
\label{sec:discrete}

Let $A\in\mathbb{R}^{n\times n}$ be Schur stable, i.e., $\rho(A)<1$, and
consider the Stein equation
\begin{equation}
  X-A^{\mathsf{T}}XA=Q,
  \qquad Q=Q^{\mathsf{T}}.
  \label{eq:stein}
\end{equation}
This orientation is the direct discrete counterpart of
\eqref{eq:continuous-lyapunov}.  The convention
$X-AXA^{\mathsf{T}}=Q$ is obtained by replacing $A$ with $A^{\mathsf{T}}$.
The standard Gramian properties recalled below can be found,
% The standard Gramian facts below may be found, 
for example, in
\citet{ZhouDoyleGlover1996}. 

% \begin{proposition}[What always survives]
\begin{proposition}[\citet{ZhouDoyleGlover1996}]
  \label{prop:stein-standard}
  Equation~\eqref{eq:stein} has the unique symmetric solution
  \begin{equation}
    X=\sum_{j=0}^{\infty}(A^{\mathsf{T}})^jQA^j.
    \label{eq:stein-series}
  \end{equation}
  Consequently,
  \[
    Q\succeq0\Longrightarrow X\succeq0,
    \qquad
    Q\succ0\Longrightarrow X\succeq Q\succ0.
  \]
  For $Q\succeq0$, one has $X\succ0$ if and only if the pair
  $(Q^{1/2},A)$ is observable.
\end{proposition}

% \begin{proof}
% Since $A$ is Schur stable, the series in
% \eqref{eq:stein-series} converges in norm to a symmetric matrix.
% % Schur stability makes the series \eqref{eq:stein-series} convergent.
% % Shifting its index verifies \eqref{eq:stein}.  If two solutions
% % existed, their difference $Z$ would satisfy
% % $Z=(A^{\mathsf{T}})^jZA^j$ for every $j$, whose right-hand side tends to zero;
% % hence $Z=0$.  
% An index shift verifies \eqref{eq:stein}.
% If $Z$ is the difference of two solutions, then
% \[
% Z=(A^{\mathsf{T}})^jZA^j,
% \qquad\text{for every }j\geq0.
% \]
% The right-hand side tends to zero as $j\to\infty$, so $Z=0$.

% For $Q\succeq0$ and $v\in\mathbb{R}^n$,
% \begin{equation}
%   v^{\mathsf{T}}Xv
%   =\sum_{j=0}^{\infty}
%     \lVert Q^{1/2}A^jv\rVert^2\geq0.
%   \label{eq:stein-quadratic}
% \end{equation}
% The right-hand side vanishes if and only if
% $Q^{1/2}A^jv=0$ for every $j\geq0$.
% By the Cayley--Hamilton theorem, this is equivalent to
% $Q^{1/2}A^jv=0$ for $0\leq j\leq n-1$.
% Thus $X\succ0$ if and only if the only vector satisfying these
% conditions is $v=0$, which is precisely the observability
% criterion.

% % If $Q\succ0$, the $j=0$ term is strictly positive for $v\neq0$.
% % For semidefinite $Q$, equality in \eqref{eq:stein-quadratic} holds
% % exactly when $Q^{1/2}A^jv=0$ for every $j$.  By Cayley--Hamilton it is
% % enough to check $0\leq j\leq n-1$, which is precisely the
% % observability criterion.
% \end{proof}
The following examples show that no general entrywise-nonnegativity
conclusion can be added to Proposition~\ref{prop:stein-standard}, even
for an identity forcing matrix.

% \begin{proposition}[Minimal discrete-time counterexample]
\begin{example}
  \label{prop:quadratic-counterexample}
  Let
  \begin{equation}
    a_2(z)=z^2+\frac12z-\frac14,
    \qquad
    A_2=
    \begin{bmatrix}
      0&1\\[1mm]
      \frac14&-\frac12
    \end{bmatrix}.
    \label{eq:quadratic-example}
  \end{equation}
  Then $a_2$ is strictly Schur stable, but the solution of
  $X-A_2^{\mathsf{T}}XA_2=I_2$ is
  \begin{equation}
    X_2=\frac1{25}
    \begin{bmatrix}
      31&-16\\
      -16&96
    \end{bmatrix}\succ0.
    \label{eq:X2}
  \end{equation}
  In particular, $(X_2)_{12}=-16/25<0$.  Dimension two is the smallest
  dimension in which this can occur.
\end{example}

\begin{proof}
The zeros of $a_2$ are $(-1\pm\sqrt5)/4$, both of which
have modulus less than one. Direct multiplication gives
\[
  A_2^{\mathsf{T}}X_2A_2
  =\frac1{25}
    \begin{bmatrix}
      6&-16\\
      -16&71
    \end{bmatrix},
\]
and therefore $X_2-A_2^{\mathsf{T}}X_2A_2=I_2$.  Uniqueness and positive definiteness
follow from Proposition~\ref{prop:stein-standard}.  

% In dimension one,
% the equation is $X(1-A^2)=Q$, so $X=Q/(1-A^2)$ has the same scalar sign
% as $Q$ whenever $|A|<1$.
In dimension one, the solution is
$X=Q/(1-A^2)$.
Since $1-A^2>0$ whenever $|A|<1$, a nonnegative forcing
$Q$ necessarily gives $X\geq0$.
This proves the minimality of dimension two.
\end{proof}

% \begin{remark}[The alternate Stein convention]
\begin{remark}
  For the convention $Y-A_2YA_2^{\mathsf{T}}=I_2$, the same controller
  companion matrix in \eqref{eq:quadratic-example} gives
  \[
    Y=\frac1{25}
    \begin{bmatrix}
      76&-34\\
      -34&51
    \end{bmatrix}\succ0.
  \]
  Its off-diagonal entries are negative.
Thus entrywise nonnegativity can fail under either standard
Stein convention.
  % Thus failure of entrywise nonnegativity occurs under either of the
  % two standard Stein orientations.
\end{remark}

% \begin{remark}[A family of quadratic counterexamples]
\begin{remark}
  \label{rem:quadratic-family}
  For
  \[
    a(z)=z^2+\alpha z+\beta,
    \qquad
    A=\begin{bmatrix}0&1\\-\beta&-\alpha\end{bmatrix},
  \]
  solving $X-A^{\mathsf{T}}XA=I_2$ entry by entry gives
  \begin{equation}
    X_{12}
    =\frac{2\alpha\beta}
    {(1-\beta)((1+\beta)^2-\alpha^2)}.
    \label{eq:quadratic-formula}
  \end{equation}
  The strict Jury conditions imply
  $1-\beta>0$ and $1+\beta>|\alpha|$, so the denominator in
  \eqref{eq:quadratic-formula} is positive.  Hence the sign of
  $X_{12}$ is the sign of $\alpha\beta$, and every strictly Schur pair
  with $\alpha\beta<0$ gives a counterexample.
\end{remark}

% The preceding quadratic has coefficients of mixed sign.  This is not
% the source of the obstruction.
Entrywise nonnegativity can also fail when every coefficient
of the defining polynomial is strictly positive.

% \begin{proposition}[Counterexample with positive coefficients]
\begin{example}
  \label{prop:positive-coefficient-counterexample}
  The polynomial and its companion matrix
  \begin{equation*}
    \begin{split}
    a_3(z)&=z^3+\frac1{10}z^2+\frac15z+\frac35,\\
    A_3&=
    \begin{bmatrix}
      0&1&0\\
      0&0&1\\
      -\frac35&-\frac15&-\frac1{10}
    \end{bmatrix}
    \end{split}
    \label{eq:cubic-example}
  \end{equation*}
  are strictly Schur stable.  For $Q=I_3$, the Stein solution is
  \begin{equation}
    X_3=\frac1{2375}
    \begin{bmatrix}
      6587&1512&-180\\
      1512&9502&1470\\
      -180&1470&11700
    \end{bmatrix}\succ0,
    \label{eq:X3}
  \end{equation}
  and $(X_3)_{13}=-36/475<0$.
\end{example}

\begin{proof}
On $|z|=1$,
\[
  \left|\frac1{10}z^2+\frac15z+\frac35\right|
  \leq\frac9{10}<1=|z^3|.
\]
% Rouch\'e's theorem shows that $a_3$ and $z^3$ have the same number of
% zeros in $\mathbb{D}$, so all three zeros of $a_3$ lie strictly in $\mathbb{D}$.
By Rouch\'e's theorem, $a_3$ and $z^3$ have the same
number of zeros in the open unit disk $\mathbb{D}$.
Thus all three zeros of $a_3$ lie in $\mathbb{D}$,
and $A_3$ is Schur stable.
Direct multiplication gives
\[
  A_3^{\mathsf{T}}X_3A_3
  =\frac1{2375}
  \begin{bmatrix}
    4212&1512&-180\\
    1512&7127&1470\\
    -180&1470&9325
  \end{bmatrix}
  =X_3-I_3.
\]
Hence $X_3$ solves the Stein equation.
Proposition~\ref{prop:stein-standard} gives uniqueness
and positive definiteness, while \eqref{eq:X3} exhibits
the claimed negative entry.
% Thus \eqref{eq:X3} is the unique Stein solution and has the claimed
% negative entry.  It is positive definite by
% Proposition~\ref{prop:stein-standard}.
\end{proof}

% \begin{remark}[Hardy-space interpretation]
%   \label{rem:hardy}
%   Factor $Q=L^{\mathsf{T}}L$ and define
%   \[
%     f_k(w):=L(I-wA)^{-1}e_k
%              =\sum_{j=0}^{\infty}w^jLA^je_k.
%   \]
%   Then \eqref{eq:stein-series} says
%   \[
%     X_{k\ell}=\langle f_k,f_\ell\rangle_{H^2}.
%   \]
%   Thus $X$ is a Gram matrix, which explains its Loewner positivity;
%   cross inner products, however, have no prescribed sign.  At the
%   polynomial level, the direct disk analogue of the half-plane
%   coefficient lemma is false: $P(t,s)=1-\tfrac12ts$ is zero-free on
%   $\mathbb{D}^2$ and positive on $(0,1)^2$, but has a negative coefficient.
%   This example shows that disk stability and positivity on $(0,1)^2$
%   do not by themselves imply coefficientwise nonnegativity; it does
%   not exclude more specialized certificates using companion structure.
% \end{remark}
% \begin{remark}[Hardy-space interpretation]
\begin{remark}
\label{rem:hardy}
For $Q\succeq0$, choose $L\in\mathbb{R}^{r\times n}$ such that
$Q=L^{\mathsf{T}}L$, and define, for $w\in\mathbb{D}$,
\[
f_k(w):=L(I-wA)^{-1}e_k
=
\sum_{j=0}^{\infty}w^jLA^je_k.
\]
Schur stability ensures that these functions belong to the
vector-valued Hardy space $H^2(\mathbb{D};\mathbb{C}^r)$.
Using its coefficient inner product,
\eqref{eq:stein-series} gives
\[
X_{k\ell}=\langle f_k,f_\ell\rangle_{H^2}.
\]
Thus $X$ is the Gram matrix of $f_1,\ldots,f_n$, which
explains its positive semidefiniteness.
The inner products between distinct functions may be negative.
% For $Q\succeq0$, factor $Q=L^{\mathsf{T}}L$ and define
% \[
%   f_k(w):=L(I-wA)^{-1}e_k
%   =\sum_{j=0}^{\infty}w^jLA^je_k.
% \]
% Then \eqref{eq:stein-series} gives
% \[
%   X_{k\ell}=\langle f_k,f_\ell\rangle_{H^2}.
% \]
% Thus $X$ is a Gram matrix, which explains its positive semidefiniteness.
% The inner products between distinct functions, however, need not be
% nonnegative.
\end{remark}
\section{Positivity after a Cayley--companion transformation}
\label{sec:cayley}

Propositions~\ref{prop:quadratic-counterexample}
and~\ref{prop:positive-coefficient-counterexample} show that
a general entrywise nonnegativity result fails in the original
discrete companion coordinates. The continuous-time theorem
nevertheless yields such a result after a Cayley transform
and a change of basis.

% Although Propositions~\ref{prop:quadratic-counterexample}
% and~\ref{prop:positive-coefficient-counterexample} rule out the direct
% discrete analogue, a coordinate-transformed version follows from the
% continuous-time theorem.

% \begin{theorem}[Cayley--companion positivity]
\begin{theorem}
  \label{thm:cayley}
  Let $A\in\mathbb{R}^{n\times n}$ be a Schur stable controller companion
  matrix in the form \eqref{eq:companion}, let
  $Q=Q^{\mathsf{T}}\succeq0$, and let $X$ be the solution
of \eqref{eq:stein}.  Define
  \begin{equation}
    H:=(A-I)(A+I)^{-1}.
    \label{eq:cayley-H}
  \end{equation}
  There exists a real nonsingular matrix $S$, depending only on $A$,
  such that
  \[
    B:=S^{-1}HS
  \]
  is a Hurwitz controller companion matrix and
  \begin{equation*}
    Y:=S^{\mathsf{T}}XS
    \label{eq:Y-congruence}
  \end{equation*}
  is entrywise nonnegative.  If $Q\succ0$, then every entry of $Y$ is
  strictly positive.
\end{theorem}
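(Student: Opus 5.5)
The plan is to convert the Stein equation into a continuous-time Lyapunov equation for $H$, put $H$ into controller companion form by a similarity $S$, and then apply Theorem~\ref{thm:continuous-main} and Corollary~\ref{cor:strict-continuous} to the transformed equation.

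\emph{Step 1 (Cayley transform).} Since $A$ is Schur stable, $-1$ is not an eigenvalue, so $A+I$ is invertible and $H$ is well defined. If $Av=\lambda v$, then $Hv=\mu v$ with $\mu=(\lambda-1)/(\lambda+1)$. For $|\lambda|<1$ one has $\operatorname{Re}\mu=(|\lambda|^2-1)/|\lambda+1|^2<0$, so $H$ is Hurwitz. The inverse relation is $A=(I+H)(I-H)^{-1}$. Substituting this into $X-A^{\mathsf{T}}XA=Q$ and congruencing by $(I-H)$ gives
\[
(I-H)^{\mathsf{T}}X(I-H)-(I+H)^{\mathsf{T}}X(I+H)=(I-H)^{\mathsf{T}}Q(I-H).
\]
The left-hand side equals $-2(XH+H^{\mathsf{T}}X)$. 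Hence
\[
XH+H^{\mathsf{T}}X=-\tilde Q,\qquad \tilde Q:=\tfrac12(I-H)^{\mathsf{T}}Q(I-H)\succeq0.
\]
Here $I-H=2(A+I)^{-1}$ is nonsingular, so $\tilde Q=2(A+I)^{-\mathsf{T}}Q(A+I)^{-1}$, and $\tilde Q\succ0$ whenever $Q\succ0$.

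\emph{Step 2 (companion realization).} I will show that $H$ is nonderogatory, i.e.\ cyclic. $A$ is a companion matrix, so it is cyclic with cyclic vector $e_n$. $H$ is a rational function of $A$ that is injective on the spectrum of $A$, since $\lambda\mapsto(\lambda-1)/(\lambda+1)$ is a Möbius map. A rational function of a matrix that is injective on its spectrum preserves the Jordan structure, so $H$ is also cyclic. More directly, $A=(I+H)(I-H)^{-1}$ is a rational function of $H$, so every polynomial in $A$ is a polynomial in $H$. Then $\operatorname{span}\{H^je_n\}$ is $H$-invariant and contains $e_n$. It is also invariant under $(I-H)^{-1}$, which is a polynomial in $H$ by Cayley--Hamilton, and therefore under $A$. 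So it contains the whole Krylov space of $A$, which is $\mathbb{R}^n$, and $e_n$ is a cyclic vector for $H$. Let $\chi_H(s)=s^n+\tau_1s^{n-1}+\cdots+\tau_n$, which is real and Hurwitz, and let $B$ be its controller companion matrix. Define $S$ by $Se_{n-p}:=h^{H}_p(H)e_n$, where the $h^H_p$ are the Horner polynomials of $\chi_H$. Equivalently, $S$ is the Krylov matrix of $(H,e_n)$ multiplied by a unit triangular matrix. By \eqref{eq:horner-basis} applied to $B$ and the uniqueness of the companion similarity between cyclic pairs, $HS=SB$. $S$ is real, nonsingular, and depends only on $A$.

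\emph{Step 3 (conclusion).} Congruencing the equation of Step 1 by $S$ gives
\[
YB+B^{\mathsf{T}}Y=-S^{\mathsf{T}}\tilde QS,
\]
where $Y=S^{\mathsf{T}}XS$ and $S^{\mathsf{T}}\tilde QS\succeq0$. This is exactly the setting of Theorem~\ref{thm:continuous-main} with the Hurwitz companion matrix $B$, and uniqueness identifies $Y$ as the solution. Therefore $Y_{k\ell}\ge0$. When $Q\succ0$, the forcing $S^{\mathsf{T}}\tilde QS$ is positive definite, and Corollary~\ref{cor:strict-continuous} gives strict positivity of every entry.

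The main obstacle I expect is Step 2, namely verifying that $H$ is cyclic so that a real companion similarity exists. The invariant-subspace argument above handles this cleanly. The rest is the routine Cayley algebra of Step 1.
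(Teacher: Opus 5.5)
Your proposal is correct and follows the paper's proof essentially step for step: the Cayley conversion to $H^{\mathsf T}X+XH=-\tfrac12(I-H)^{\mathsf T}Q(I-H)$, cyclicity of $e_n$ for $H$ because polynomials in $A$ are polynomials in $H$, then congruence by $S$ and an appeal to Theorem~\ref{thm:continuous-main} and Corollary~\ref{cor:strict-continuous}. Your Horner-basis choice of $S$ is exactly the construction in Remark~\ref{rem:constructive-S}. One aside is stated too generally: a rational function that is injective on the spectrum preserves Jordan structure only if also $f'\neq0$ there, which does hold for this M\"obius map, and your invariant-subspace argument makes the aside unnecessary anyway.
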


\begin{proof}
Since $A$ is strictly Schur, $-1$ is not an eigenvalue of $A$, so
\eqref{eq:cayley-H} is well defined.  By the spectral mapping theorem, every eigenvalue of $H$
has the form
% Spectral mapping sends an
% eigenvalue $\lambda\in\mathbb{D}$ of $A$ to
\[
\mu=\frac{\lambda-1}{\lambda+1},
\qquad \lambda\in\sigma(A).
\]
Moreover,
\[
\operatorname{Re}\mu
=
\frac{|\lambda|^2-1}{|\lambda+1|^2}
<0.
\]
% \[
%   \mu=\frac{\lambda-1}{\lambda+1},
%   \qquad \operatorname{Re}\mu<0,
% \]
Thus $H$ is Hurwitz.  The inverse Cayley relation is
\begin{equation}
  A=(I+H)(I-H)^{-1}.
  \label{eq:inverse-cayley}
\end{equation}
By the Cayley--Hamilton theorem, $(A+I)^{-1}$ and
$(I-H)^{-1}$ are polynomials in $A$ and $H$, respectively.
Hence the real unital matrix algebras generated by $A$
and $H$ coincide.
Since $A$ has the controller companion form
\eqref{eq:companion}, $e_n$ is a cyclic vector for $A$.
Consequently,
% The inverses in \eqref{eq:cayley-H} and
% \eqref{eq:inverse-cayley} are polynomials in the corresponding
% matrices by Cayley--Hamilton.  Hence the unital matrix algebras
% generated by $A$ and $H$ coincide.  A companion matrix is cyclic, so
% $e_n$ is a cyclic vector for $A$ and
\begin{align*}
  \operatorname{span}\{p(H)e_n:p\in\mathbb{R}[s]\}
  &=\operatorname{span}\{p(A)e_n:p\in\mathbb{R}[s]\}\\
  &=\mathbb{R}^n.
\end{align*}
% Thus $H$ is cyclic and is similar over $\mathbb{R}$ to the controller
% companion matrix $B$ of its characteristic polynomial.  This gives a
% real nonsingular $S$ with $B=S^{-1}HS$.
Thus $e_n$ is also cyclic for $H$, and $H$ is similar
over $\mathbb{R}$ to the controller companion matrix $B$
of its characteristic polynomial.
We may therefore choose a real nonsingular matrix $S$,
depending only on $A$, such that $B=S^{-1}HS$.

Using \eqref{eq:inverse-cayley}, multiply \eqref{eq:stein} on the left
by $(I-H)^{\mathsf{T}}$ and on the right by $I-H$.  Expansion gives
\begin{equation}
  H^{\mathsf{T}}X+XH
  =-\frac12(I-H)^{\mathsf{T}}Q(I-H).
  \label{eq:stein-to-lyapunov}
\end{equation}
Congruencing \eqref{eq:stein-to-lyapunov} by $S$ gives
\begin{equation}
  B^{\mathsf{T}}Y+YB=-\widehat Q,
  \qquad
  \widehat Q:=\frac12
  S^{\mathsf{T}}(I-H)^{\mathsf{T}}Q(I-H)S\succeq0.
  \label{eq:transformed-lyapunov}
\end{equation}
Theorem~\ref{thm:continuous-main} applied to $B$ proves that $Y$ is
entrywise nonnegative.  If $Q\succ0$, then $\widehat Q\succ0$ because
$(I-H)S$ is nonsingular, and
Corollary~\ref{cor:strict-continuous} makes every entry of $Y$ strictly
positive.
\end{proof}

% \begin{remark}[A constructive choice of coordinates]
\begin{remark}
  \label{rem:constructive-S}
  The matrix $S$ can be constructed explicitly from $A$. 
  Let $a_d(z):=\det(zI-A)$ and write
  % be the discrete-time Schur polynomial and
  % let
  \[
    b(s):=\det(sI-H)
    =\frac{(1-s)^n}{\det(A+I)}
      a_d\!\left(\frac{1+s}{1-s}\right).
  \]
  The right-hand side is interpreted as a polynomial after
cancellation, so the identity also holds at $s=1$.
Write
  $b(s)=s^n+\beta_1s^{n-1}+\cdots+\beta_n$,
  and define its Horner polynomials by
\[
k_0(s):=1,~~
k_p(s):=s^p+\beta_1s^{p-1}+\cdots+\beta_p,
\quad 1\leq p\leq n-1.
\]
Then define $S$ through
\[
Se_{n-p}=k_p(H)e_n,
\qquad 0\leq p\leq n-1.
\]
Since $e_n$ is cyclic for $H$ and each $k_p$ is monic
of degree $p$, these columns form a basis of $\mathbb{R}^n$,
so $S$ is nonsingular.
The Horner recurrence, together with the Cayley--Hamilton
theorem, gives $HS=SB$, where $B$ is the controller companion
matrix of $b$.
Thus $S$ depends only on $A$, or equivalently on $a_d$
in the fixed controller realization, and is independent of $Q$.
  % If $k_p(s)=s^p+\beta_1s^{p-1}+\cdots+\beta_p$, then one may choose
  % $S$ through the Horner basis
  % \[
  %   Se_{n-p}=k_p(H)e_n,
  %   \qquad 0\leq p\leq n-1.
  % \]
  % Indeed, $e_n$ is cyclic for $H$ because it is cyclic for $A$ and the
  % algebras generated by $A$ and $H$ coincide.  This makes clear that
  % $S$ depends only on $A$ (equivalently, on its polynomial in the fixed
  % controller realization), not on $Q$.
\end{remark}

\begin{remark}
The conclusion of Theorem~\ref{thm:cayley} concerns
$Y=S^{\mathsf{T}}XS$, which represents the same quadratic form
as $X$ in a new state basis.
Entrywise nonnegativity is not preserved under general
congruences, so this conclusion is consistent with the negative
entries in \eqref{eq:X2} and \eqref{eq:X3}.
In particular, the theorem does not imply entrywise
nonnegativity of $X$ in the original discrete companion
coordinates.
% The conclusion of Theorem~\ref{thm:cayley} concerns the congruence
% $S^{\mathsf{T}}XS$, which is the matrix of the same quadratic form in a new
% state basis.  Entrywise order is not invariant under congruence.  The
% theorem is therefore compatible with the negative entries in
% \eqref{eq:X2} and \eqref{eq:X3}; it does not restore entrywise
% positivity of $X$ in the original discrete companion coordinates.
\end{remark}

\section{Conclusion}
\label{sec:conclusion}

For real Hurwitz controller companion matrices, we have shown that
positive-semidefinite forcing produces an entrywise nonnegative
Lyapunov solution, while positive-definite forcing makes every entry
strictly positive. The proof relies on mixed positivity of adjugate
coefficients, established by dimension reduction and transferred to
the Lyapunov equation through a controllability Gramian and the Horner
basis. These entrywise conclusions are specific to the controller
companion coordinates.

The direct discrete-time analogue fails even for identity forcing
and positive polynomial coefficients. However, a Cayley transformation
recovers entrywise nonnegativity in a transformed companion basis that
depends only on the state matrix. An open direction is to characterize
the Schur companion matrices and positive-semidefinite forcing matrices
for which the Stein solution is entrywise nonnegative in the original
coordinates.
% The continuous-time companion structure imposes an entrywise order on
% the entire Lyapunov solution operator: positive-semidefinite forcing
% produces an entrywise nonnegative solution, and positive-definite
% forcing produces an entrywise strictly positive solution.  The common
% key in our two proofs is the mixed positivity of the adjugate
% coefficients.  The first proof identifies its algebraic source in a
% codimension-one compression and a rank-one determinant decomposition;
% the alternative proof identifies its analytic source in the
%   half-plane geometry encoded by the rational certificate
% \eqref{eq:rational-certificate}.  Discrete-time stability retains the
% usual Gramian and Loewner-order conclusions but does not retain the
% entrywise order, even for identity forcing and positive polynomial
% coefficients.  A Cayley transform identifies a coordinate-transformed
% remnant of the continuous theorem: entrywise positivity holds after
% passing to a companion basis for the transformed Hurwitz matrix.  Further work may
% seek intrinsic characterizations of discrete companion polynomials
% and forcing matrices for which entrywise nonnegativity happens to hold
% in the original coordinates.

\bibliographystyle{elsarticle-harv}
\bibliography{companion_lyapunov_references_revised}

\end{document}